\documentclass[conference]{IEEEtran}
\usepackage{amsmath,amssymb,amsfonts, mathrsfs}
\usepackage{amsthm}
\usepackage{bbm}
\usepackage[linesnumbered,ruled]{algorithm2e}
\usepackage{array}
\ifCLASSOPTIONcompsoc
    \usepackage[caption=false, font=normalsize, labelfont=sf, textfont=sf]{subfig}
\else
\usepackage[caption=false, font=footnotesize]{subfig}
\fi
\usepackage{textcomp}

\usepackage{graphicx}
\usepackage{epstopdf}
\usepackage{cite}
\usepackage{booktabs}
\usepackage{multirow}
\usepackage{bigstrut}
\usepackage{color}

\graphicspath{{images/}}
\usepackage{subfiles}

\newtheorem{theorem}{Theorem}
\newtheorem{lemma}{Lemma}
\newtheorem{remark}{Remark}

\begin{document}
\IEEEoverridecommandlockouts

\title{Goal-Oriented Estimation of Multiple Markov Sources in Resource-Constrained Systems}
\author{
\IEEEauthorblockN{Jiping~Luo~and~Nikolaos~Pappas}
    \IEEEauthorblockA{Department of Computer and Information Science, Link\"oping University, Sweden}
\IEEEauthorblockA{Email:~\{jiping.luo,~nikolaos.pappas\}@liu.se}

\thanks{This work has been supported in part by the Swedish Research Council (VR), Excellence Center at Linköping – Lund in Information Technology (ELLIIT), Graduate School in Computer Science (CUGS), the European Union (ETHER, 101096526), and the European Union's Horizon Europe research and innovation programme under the Marie Skłodowska-Curie Grant Agreement No 101131481 (SOVEREIGN).}
}

\maketitle

\begin{abstract}
	This paper investigates goal-oriented communication for remote estimation of multiple Markov sources in resource-constrained networks. An agent decides the updating times of the sources and transmits the packet to a remote destination over an unreliable channel with delay. The destination is tasked with source reconstruction for actuation. We utilize the metric \textit{cost of actuation error} (CAE) to capture the state-dependent actuation costs. We aim for a sampling policy that minimizes the long-term average CAE subject to an average resource constraint. We formulate this problem as an average-cost constrained Markov Decision Process (CMDP) and relax it into an unconstrained problem by utilizing \textit{Lyapunov drift} techniques. Then, we propose a low-complexity \textit{drift-plus-penalty} (DPP) policy for systems with known source/channel statistics and a Lyapunov optimization-based deep reinforcement learning (LO-DRL) policy for unknown environments. Our policies significantly reduce the number of uninformative transmissions by exploiting the timing of the important information.
\end{abstract}

\section{Introduction}

Networked control systems (NCSs) are spatially distributed systems where plants, sensors, controllers, and actuators are interconnected via a shared resource-constrained communication network\cite{walsh2001scheduling, WNCSSurvey}. Such systems are ubiquitous in various applications, such as swarm robotics, connected autonomous vehicles, and smart factories. One of the fundamental problems in these systems is remote estimation of stochastic processes using under-sampled and delayed measurements\cite{NikolaosGoalOriented}. 

Despite various endeavors, most existing communication protocols for remote estimation and control in NCSs are context-agnostic. The primary objective has been to minimize the estimation error (i.e., the \textit{distortion} between the source state and the reconstructed state)\cite{RE-2, Cocco2023, pezzutto2022transmission}, indicating that information is valuable when it is accurate at the point of actuation. However, high accuracy does not necessarily mean better actuation performance. Consider, for example, a remotely controlled drone communicating with the remote center to ensure safe and successful operation. Due to resource constraints, the drone can only send its observation intermittently. In this context, its status should be updated more frequently in critical situations (e.g., close to an obstacle or deviate from planned trajectories), even though estimation error can be large sometimes. Therefore, it is crucial to factor into the communication process the \textit{semantics} (i.e., state-dependent significance, context-aware requirements, and goal-oriented usefulness) of messages and prioritize the information flow efficiently according to application demands\cite{Marios, Petar}.

Information \textit{freshness}, measured by the Age of Information (AoI), that is, the time elapsed since the latest received update was generated\cite{pappas2023age}, has recently been employed in NCSs\cite{AoI_estimation, WiSwarm, kutsevol2023experimental}. However, AoI does not consider the source evolution and the application context. Several metrics have been introduced to address the shortcomings of AoI\cite{niko2019statechange, AoII_TWC, UoI_estimation, WiOpt23}. The authors in \cite{niko2019statechange, WiOpt23} defined state-dependent AoI variables to account for the significance of different states of the stochastic process. Age of Incorrect Information (AoII)\cite{AoII_TWC}, defined as a composite of distortion and age penalties, was employed to capture the cost of not having a correct estimate for some time. The Urgency of Information (UoI)\cite{UoI_estimation} is a weighted distortion metric incorporating context-aware significance through weights. However, these metrics do not directly capture the ultimate goal of communication in NCSs --- actuation. In \cite{AoA23} the authors defined the Age of Actuation (AoA) which is a more general metric than AoI and becomes relevant
when the information is utilized to perform actions in a timely manner.

This paper extends the results of \cite{NikolaosGoalOriented, EmmanouilMinizationCAE, SalimnejadTCOM}. A semantic-empowered and goal-oriented metric, namely \textit{cost of actuation error} (CAE), was first introduced in \cite{NikolaosGoalOriented} to capture state-dependent actuation costs. The problem of remote tracking of a discrete-time Markov source in resource-constrained systems was further studied in \cite{EmmanouilMinizationCAE, SalimnejadJCN, SalimnejadTCOM}. In this work, we consider a more general case where an agent observes multiple Markov sources and decides when to update source status to minimize the long-term average CAE while satisfying an average cost constraint. In addition, we consider a one-slot communication delay between the transmitter and the receiver, making the problem more realistic and challenging. This problem is formulated as a constrained Markov Decision Process (CMDP) and is relaxed using the Lyapunov optimization theorem. We propose a low-complexity drift-plus-penalty (DPP) policy for known environments and a learning-based policy for unknown environments. Our policies achieve near-optimal performance in CAE minimization and significantly reduce uninformative transmissions.


\section{System Model and Problem Formulation}\label{sec:system_model}
\subsection{System Model}
We consider a slotted-time communication system shown in Fig.~\ref{fig:systemmodel}, where the destination is tasked with the remote estimation of $M$ Markov sources. Denote $\mathcal{M} = \{1, 2, \ldots, M\}$ as the index set of the sources. Each source $m \in \mathcal{M}$ is modeled by a $N_m$-state discrete-time Markov process\footnote{DTMC is widely applied in safety-critical systems, such as autonomous driving and cyber-physical security\cite{althoff2011,ye2004robustness}.} $\{X_t^m\}_{t\geq 0}$. The value of $X_t^m$ is chosen from a finite set $\mathbb{X}^m = \{1, 2, \ldots, N_m\}$. The state transition matrix of source $m$ is denoted by $P^m$, i.e., for any $i, j \in \mathbb{X}^m$, the state transition probability is $P^m_{i,j} = \mathbb{P}(X_{t+1}^m = j|X_t^m = i)$. 
\begin{figure}[t]
	\centering
	\includegraphics[width=0.9\linewidth]{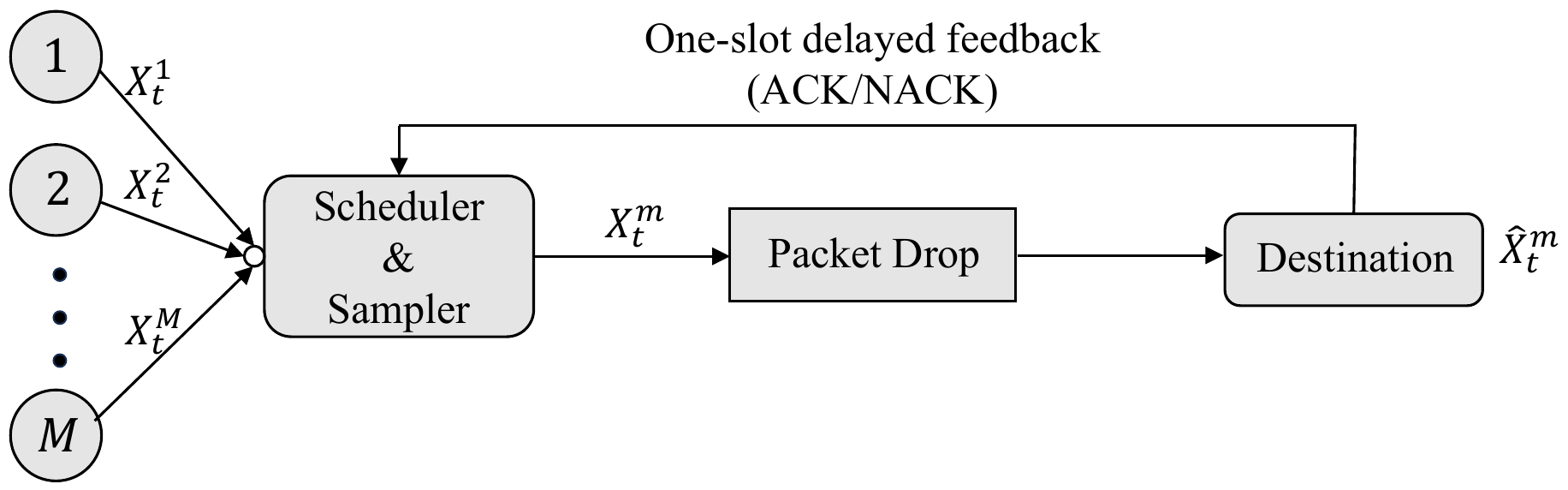}
	\caption{Remote state estimation of multiple Markov sources.}
	\label{fig:systemmodel}
\end{figure}

An agent decides at the beginning of each time slot $t$ which source to sample and then transmits the packet to the destination over an unreliable channel. Only one packet can be sent at a time. The sampling decision at time $t$ is denoted by $\alpha_t$ and is chosen from a finite set $\mathcal{A} = \{0, 1, \ldots, M\}$. If $\alpha_t = m$, it means that source $m$ is sampled, and if $\alpha_t = 0$, it means no source is selected and the transmitter remains silent. Moreover, we consider a cost $c_m$ to represent the resource utilization cost (e.g., bandwidth, and/or power consumption) in sampling and transmission of an update of source $m$. Thus, the cost of performing action $\alpha_t$ can be given by
\begin{align}
	C_t = \sum_{m \in \mathcal{M}} c_m  \mathbbm{1}(\alpha_t = m)\label{eq:sampling and trans cost}
\end{align}
where $\mathbbm{1}(\cdot)$ is the indicator function.

The channel realization $h_t$ is equal to $1$ if the packet is successfully decoded at the receiver and $0$ otherwise. The probability of a successful transmission is defined as $p_s = \mathbb{P}(h_t = 1)$. We consider that the sampling and transmission processes take one slot to be performed. Therefore, the receiver receives an update from the transmitter after a one-slot delay. After each transmission, the receiver sends an acknowledgment (ACK)/NACK packet to the transmitter to indicate whether the transmission was successful or not. It is assumed that ACK packets can be delivered instantaneously and error-free. This is a common assumption in the literature, as ACK/NACK packets are generally much smaller than data packets and are possibly sent over a separate channel. 

The source states are reconstructed at the destination using the following estimate
\begin{align}
	\hat{X}_{t+1}^m = \begin{cases}
		X_{t}^m, & \mathrm{if}~\alpha_{t} = m~\mathrm{and}~h_{t} = 1, \\
		\hat{X}_{t}^m, & \mathrm{otherwise}.
	\end{cases} \label{eq:estimate}
\end{align}

The actuator takes actions according to the estimated state of the sources, i.e., $u_t = \phi(\hat{X}_t^1, \ldots, \hat{X}_t^{M})$. The discrepancy between the source states and the reconstructed states can cause actuation errors. In practice, some source states are more critical than others, and thus some actuation errors may have a larger impact than others. To this end, we utilize the \textit{cost of actuation error} (CAE) to capture the \textit{significance} of error at the point of actuation\cite{NikolaosGoalOriented}. The CAE of each source $m$ can be represented using a pre-defined non-commutative function\footnote{Function $\delta^m$ is non-commutative, i.e., for any $i, j \in \mathbb{X}^m$, $\delta^m (X_t^m = i, \hat{X}_t^m = j)$ is not necessarily equal to $\delta^m (X_t^m = j, \hat{X}_t^m = i)$. This is because different actuation errors may have different costs. By contrast, distortion is generally a commutative measure.} $\delta^m(X_t^m, \hat{X}_t^m)$. The total CAE at time $t$ can be defined as
\begin{align}
		\Delta_t 
		= g(s_t, \alpha_t, s_{t+1})
		=\sum_{m \in \mathcal{M}} \omega_m  \delta^m(X_{t + 1}^m, \hat{X}_{t + 1}^m) \label{eq:CAE}
\end{align}
where $\omega_m \in \mathbb{R}^{+}$ represents the \textit{significance} of source $m$, $s_t = \{X_t^m, \hat{X}_t^m\}_{m\in \mathcal{M}}$ is the system state at time $t$. 
\begin{remark}
    The system has a two-level significance, namely the source significance and the state importance.
\end{remark}

\begin{remark}
    The agent can only evaluate the effectiveness of the decision after a delay of one slot. This implies that the agent needs to make predictions on system state transitions and actuation errors.
\end{remark}

\subsection{Problem Formulation}
For any sampling policy $\pi \triangleq (\alpha_1, \alpha_2, \ldots)$, the time-averaged CAE, denoted by $\bar{\Delta} (\pi)$, and the time-averaged resource utilization costs, denoted by $\bar{C}(\pi)$, are defined as
\begin{align}
	\bar{\Delta}(\pi) \triangleq \limsup\limits_{T\rightarrow \infty} \frac{1}{T} \sum_{t=0}^{T-1} \mathbb{E} \{\Delta_t\}, \label{eq:time-average objective} \\ 
	\bar{C}(\pi) \triangleq \liminf\limits_{T\rightarrow \infty} \frac{1}{T} \sum_{t=0}^{T-1} \mathbb{E} \{C_t\}. \label{eq:time-average constraint}
\end{align}

The goal of our work is to find an optimal sampling policy $\pi$ that minimizes the long-term average CAE of the multi-source system while satisfying an average resource constraint. The stochastic optimization problem can be formulated as
\begin{align}
	\min_{\pi \in \Pi_\textrm{S}}~\bar{\Delta}(\pi), ~\textrm{s.t.}~ \bar{C}(\pi) \leq C_{\textrm{max}} \label{problem:CMDP}
\end{align} 
where $C_\textrm{max} > 0$ is the threshold on the average cost, $\Pi_\textrm{S}$ is the set of all \textit{stationary policies}. For unit cost, i.e., $c_m = 1$, $\bar{C}$ represents the transmission frequency and thus $C_\textrm{max} \in (0, 1]$. Problem \eqref{problem:CMDP} is an \textit{average-cost constrained Markov Decision Process} (CMDP), which is, however, challenging to solve since it imposes a global constraint that involves the entire decision-making process. 

Classical approaches to CMDPs include linear programming and Lagrangian dynamic programming \cite{CMDP_BOOK}. Although optimal results can be achieved, they scale poorly due to the curse of dimensionality. Moreover, these solutions require full knowledge of the system statistics. 

To address these issues, in the following section, we propose a learning-based policy that tackles unknown environments and enables real-time decision-making. 


\section{Proposed Approach}
In this section, we first introduce two methods, namely \textit{Lagrangian relaxation} and \textit{Lyapunov drift}, to transform the CMDP problem \eqref{problem:CMDP} into an unconstrained problem. Then, we propose two policies to solve the relaxed one. 
\subsection{Problem Transformation}
\subsubsection{Lagrangian relaxation method} The constraint in \eqref{problem:CMDP} can be relaxed using \textit{Lagrangian multiplier}, and the resulting \textit{Lagrangian cost function} is defined as
\begin{align}
	\mathcal{L}(\pi, \lambda) \triangleq \underbrace{\bar{\Delta}(\pi)}_\textrm{objective} + \lambda  
	\underbrace{(\bar{C}(\pi) - C_\textrm{max})}_\textrm{penalty}
\end{align}
where $\lambda \geq 0$ is the Lagrangian multiplier that penalizes infeasible solutions. When $\lambda = 0$, it means that data communication is cost-free and the problem is reduced to an unconstrained MDP.

By \cite[Theorem~3.6]{CMDP_BOOK}, the CMDP can be converted into an equivalently unconstrained problem, i.e., 
\begin{align}
	\min_{\pi \in \Pi_\textrm{S}} \sup_{\lambda \geq 0} \mathcal{L}(\pi, \lambda) = \sup_{\lambda \geq 0}\min_{\pi \in \Pi_\textrm{D}}  \mathcal{L}(\pi, \lambda) \label{Problem:LagrangianMDP}
\end{align}
where $\Pi_\textrm{D}$ is the set of all \textit{stationary deterministic policies}, and the right-hand problem is a \textit{Lagrangian MDP}. For any fixed value of $\lambda$, the optimal policy of the Lagrangian MDP, denoted by $\pi^*_\lambda$, is called a $\lambda$-optimal policy. By \cite[Theorem~12.7]{CMDP_BOOK}, the optimal policy of the CMDP is a randomized mixture of two deterministic optimal policies to the Lagrangian MDP, i.e., $\pi^* = \beta \pi^*_{\gamma - \xi} + (1-\beta)\pi^*_{\gamma + \xi}$, where $\gamma = \inf\{\lambda:\bar{C}(\pi^*_\lambda) \leq C_\textrm{max}\}$, $\beta = (C_\textrm{max} - \Bar{C}(\pi^*_{\gamma + \xi})/(\Bar{C}(\pi^*_{\gamma - \xi} - \Bar{C}(\pi^*_{\gamma + \xi})$ is the randomization factor, and $\xi$ is a small perturbation.

However, finding $\gamma$ and the optimal policy is computationally intractable\cite[Section~3.2]{DifficutyOfCMDP}. One practical solution to the CMDP is based on the \textit{value iteration algorithm} (VIA) and the bisection search\cite{EmmanouilMinizationCAE}. Specifically, this approach involves an iterative procedure where VIA is applied at each iteration to find a $\lambda$-optimal policy for a given $\lambda$, and the bisection method is used to update the parameter $\lambda$. Although optimal performance can be achieved, this method is computationally inefficient, especially when dealing with large state/action spaces and multiple sources.

\subsubsection{Lyapunov drift method} According to Lyapunov optimization theorem \cite[Chapter~4]{LyapunovOptimmization}, time-averaged constraints of a stochastic optimization problem can be enforced by transforming them into queue stability problems. Specifically, we define a \textit{virtual queue} $\mathcal{Z}_t$ for the constraint in \eqref{problem:CMDP}, with update equation
\begin{align}
	\mathcal{Z}_{t+1} = \max[\mathcal{Z}_t - C_\textrm{max}, 0] + C_t. \label{eq:virtual queue}
\end{align}
Herein, $C_\textrm{max}$ acts as a virtual service rate and $C_t$ acts as a virtual arrival process. If the virtual queue $\mathcal{Z}_t$ is \textit{mean rate stable}, then the constraint in \eqref{problem:CMDP} is satisfied with probability 1\cite{LyapunovOptimmization}.


To stabilize the virtual queue, we first utilize the \textit{one-slot conditional Lyapunov drift}, which is defined as the expected change in the \textit{Lyapunov function} over one slot given the current system state, i.e.,
\begin{align}
	D(\mathcal{Z}_t) \triangleq \mathbb{E} \{L(\mathcal{Z}_{t+1}) - L(\mathcal{Z}_{t})|\mathcal{Z}_t\} \label{eq:drift function}
\end{align}
where $L(\mathcal{Z}_t) = \frac{1}{2}\mathcal{Z}_t^2$ is a quadratic Lyapunov function The expectation is with respect to the (possibly random) sampling actions. By using the inequality $(\max[Q-b, 0] + A)^2 \leq Q^2 + A^2 + b^2 + 2Q(A-b)$, the upper bound of the Lyapunov drift can be derived as 
\begin{align}
	D(\mathcal{Z}_t) \leq B + \mathbb{E} \bigl\{\mathcal{Z}_t (C_t - C_\textrm{max}) |\mathcal{Z}_t\bigr\} \label{eq:bound on drift}
\end{align}
where $B \geq \mathbb{E}\bigl\{\frac{C_t^2 + C_\textrm{max}^2}{2}|\mathcal{Z}_t\bigr\}$ is a finite constant.

We can utilize the \textit{drift-plus-penalty} (DPP) method to stabilize virtual queues (\textit{drift term}) while minimizing the time-averaged cost (\textit{penalty term}). Specifically, DPP seeks to minimize the upper bound on the following expression 
\begin{align}
	\underbrace{\mathbb{E} \{L(\mathcal{Z}_{t+1}) - L(\mathcal{Z}_{t})|\mathcal{Z}_t\}}_\textrm{drift} + V  \underbrace{\mathbb{E} \{\Delta_t|\mathcal{Z}_t\}}_\textrm{penalty} \label{eq:dpp-1}
\end{align}
where $V$ is a non-negative weight that represents how much emphasis we put on CAE minimization. Notice that the expectation of the penalty term is with respect to all the system randomness, including source state transitions, channel states, and sampling actions. By substituting \eqref{eq:bound on drift} into \eqref{eq:dpp-1}, the upper bound of the drift-plus-penalty expression can be derived as
\begin{align}
	B + \mathbb{E}
	\bigl\{
	\mathcal{Z}_t  (C_t - C_\textrm{max}) 
	+ V \Delta_t | \mathcal{Z}_t
	\bigr\}. \label{eq:bound on dpp}
\end{align}

The DPP policy utilizes the method of \textit{opportunistically minimizing an expectation} \cite[Chapter~3.1]{LyapunovOptimmization} to minimize expression \eqref{eq:dpp-1}. More specifically, at each time $t$, the agent maintains the virtual queue $\mathcal{Z}_t$, observes the system state $s_t$, and takes an action $\alpha_t$ by solving the following problem
\begin{align}
	\min_{\alpha_t \in \mathcal{A}} ~ \mathcal{Z}_t (C_t - C_\textrm{max}) + V \Delta_t. \label{problem:DPP}
\end{align}

\begin{remark}
    The DPP policy is an online policy that has no access to $s_{t+1}$ at time $t$. Therefore, the agent should know a priori the expected costs of taking an action in a certain state. To this end, we replace $\Delta_t$ with the one-slot expected CAE $\bar{\Delta}_t$, as summarized in Lemma \ref{Lemma:expected CAE}. 
\end{remark}

\begin{lemma}\label{Lemma:expected CAE} 
The one-slot expected CAE $\bar{\Delta}_t$ is given by
	\begin{align}
		\bar{\Delta}_t &= \sum_{m \in \mathcal{M}} \omega_m  \bar{\delta}^m_t \text{ where } \\
		\bar{\delta}^m_t &= 
		\begin{cases}
			\sum\limits_{k\neq i} \delta^m_{k,i} P_{i,k}^m p_s 
			+ 
			\sum\limits_{k\neq j} \delta^m_{k,j} P_{i,k}^m (1-p_s), & \textrm{if}~\alpha_t = m \\
			\sum\limits_{k\neq j} \delta^m_{k,j} P_{i,k}^m, & \textrm{if}~\alpha_t \neq m.
		\end{cases}
	\end{align}
\end{lemma}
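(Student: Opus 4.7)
The plan is to compute the one-slot conditional expectation $\bar{\Delta}_t = \mathbb{E}\{\Delta_t \mid s_t, \alpha_t\}$ directly from the definition \eqref{eq:CAE}, using linearity of expectation together with the state update rule \eqref{eq:estimate}. Since $\Delta_t = \sum_{m \in \mathcal{M}} \omega_m \delta^m(X_{t+1}^m, \hat{X}_{t+1}^m)$ and the weights $\omega_m$ are deterministic, linearity immediately reduces the problem to computing $\bar{\delta}^m_t = \mathbb{E}\{\delta^m(X_{t+1}^m, \hat{X}_{t+1}^m) \mid s_t, \alpha_t\}$ for each source $m$ separately. I would then fix $s_t$ with $X_t^m = i$ and $\hat{X}_t^m = j$ and split on the action $\alpha_t$.

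If $\alpha_t \neq m$, the estimate is frozen by \eqref{eq:estimate}, so $\hat{X}_{t+1}^m = j$ almost surely, and the only randomness comes from the source transition $X_{t+1}^m = k$ with probability $P_{i,k}^m$. Marginalizing over $k$ gives $\bar{\delta}^m_t = \sum_k \delta^m_{k,j} P^m_{i,k}$, and since $\delta^m_{j,j}=0$ the $k=j$ term vanishes, leaving the stated sum over $k \neq j$. If $\alpha_t = m$, the source transition $X_{t+1}^m = k$ (probability $P^m_{i,k}$) and the channel outcome $h_t \in \{0,1\}$ (probabilities $p_s$ and $1-p_s$) are independent, so I would condition on $h_t$: on success $\hat{X}_{t+1}^m = i$ and the contribution to the expectation is $p_s \sum_k \delta^m_{k,i} P^m_{i,k}$, while on failure the estimate remains $j$ and the contribution is $(1-p_s)\sum_k \delta^m_{k,j} P^m_{i,k}$. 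Using $\delta^m_{i,i} = \delta^m_{j,j} = 0$ to drop the zero terms recovers the two sums in the lemma.

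The only real subtlety is justifying the independence between the Markov transition and the channel realization and being explicit about which quantities are deterministic given $s_t$ and $\alpha_t$; both follow from the model description in Section~\ref{sec:system_model}, so I expect no serious obstacle. The bookkeeping of which index ($i$ or $j$) appears in the second coordinate of $\delta^m$ under each branch is the one place to be careful, since $\delta^m$ is non-commutative; writing out the joint distribution of $(X_{t+1}^m, \hat{X}_{t+1}^m)$ conditional on $(s_t, \alpha_t, h_t)$ before taking expectations makes this unambiguous.
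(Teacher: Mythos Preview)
Your proposal is correct and follows essentially the same approach as the paper: fix the sub-state $(i,j)$ for source $m$, write the joint transition probabilities of $(X_{t+1}^m,\hat X_{t+1}^m)$ given $(s_t,\alpha_t)$ by splitting on the action and the channel outcome, take the expectation, and then combine via the weighted sum over sources. The paper is slightly terser (it packages the cases into a single transition-probability display rather than conditioning on $h_t$ in words), but the logic and the bookkeeping are identical.
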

\begin{proof}
	Assume that at time $t$ the sub-system associated with source $m$ is $s^m_t = (i,j)$. The transition probabilities of sub-system $m$ can be given by
	\begin{align}
		P(s_{t+1}^m| s_t^m, \alpha_t) = \begin{cases}
			P_{i,k}^m p_s, &\textrm{if}~\alpha_t = m, h_t = 1\\
			P_{i,k}^m (1-p_s),&\textrm{if}~\alpha_t = m, h_t = 0\\
			P_{i,k}^m ,&\textrm{if}~\alpha_t \neq  m \\
			0,&\textrm{otherwise}
		\end{cases} \label{eq:transitionProbabilities}
	\end{align}
	where $k \in \mathbb{X}^m$ is the next state of source $m$. Taking expectations over the source/channel randomnesses yields
	\begin{align}
		\bar{\delta}^m_t = \sum_{s^m_{t+1}\in \mathcal{S}^m}P(s_{t+1}^m| s_t^m, \alpha_t)\delta^m(X^m_{t+1}, \hat{X}^m_{t+1}). \label{eq:sub-systemCAE}
	\end{align}
	Substituting \eqref{eq:estimate} and \eqref{eq:transitionProbabilities} into \eqref{eq:sub-systemCAE} and taking weighted sum of all the sub-systems, the lemma is hereby proved.
\end{proof}

The algorithm is given in pseudo-code in Algorithm~\ref{alg:dpp}. The time complexity of the DPP policy is $\mathcal{O}(|\mathcal{A}|)$, thus it has low complexity and it can support large-scale systems. Moreover, the DPP policy satisfies the time average constraint, as proved in Theorem \ref{Theorem:stability}. However, the DPP policy is sub-optimal because: 1) Lyapunov drift primarily focuses on constraint satisfaction thus may result in sub-optimal performance, 2) it minimizes \eqref{eq:dpp-1} in a greedy manner and ignores the long-term system performance. Furthermore, it needs to know a \textit{priori} the channel/source statistics to compute the one-slot expected costs. One can apply an estimation of the system statistics and then use this approach; however it comes with a penalty on the performance that will depend on the accuracy of the estimate.

\begin{algorithm}[htbp]
	\label{alg:dpp}
	\DontPrintSemicolon
	\SetAlgoLined
	\caption{Low-complexity DPP policy.}
	Set $V$, and initialize virtual queue $\mathcal{Z}_0 = 0$\;
	\For{$t = 1, 2, \ldots$}{
		\textsc{Observe} $S_t$ and $\mathcal{Z}_t$ at the beginning of slot $t$.\; 
        \textsc{Calculate} the expected cost $\bar{\Delta}_t$ using Lemma \ref{Lemma:expected CAE}.\;
		\textsc{Select} the best action by solving
        $a^* = \arg\min_{\alpha_t \in \mathcal{A}} \mathcal{Z}_t (C_t - C_\textrm{max}) + V \bar{\Delta}_t$.\;
		\textsc{Apply} $a^*$ to the system and update $\mathcal{Z}_t$.\;
	}
\end{algorithm}

\begin{theorem}\label{Theorem:stability}
	For any non-negative constant $V$, the DPP policy stabilizes the virtual queue $\mathcal{Z}_t$, thereby satisfying the constraint in \eqref{problem:CMDP}.
\end{theorem}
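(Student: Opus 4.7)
The plan is to apply the standard Lyapunov drift argument: bound the one-slot conditional drift under DPP by a constant (using a feasible comparison policy), then telescope and invoke Jensen's inequality to conclude mean rate stability of $\mathcal{Z}_t$, which by the Lyapunov optimization theorem is equivalent to feasibility of the original constraint. Starting from the drift-plus-penalty bound \eqref{eq:bound on dpp}, I would use the fact that the DPP rule \eqref{problem:DPP} greedily minimizes the conditional expectation of $\mathcal{Z}_t(C_t - C_\text{max}) + V \bar{\Delta}_t$ over $\alpha_t \in \mathcal{A}$. Because this inner objective is linear in the action distribution, the deterministic DPP minimizer also performs no worse than any alternative stationary (possibly randomized) comparison policy $\pi'$ whose action distribution does not depend on $\mathcal{Z}_t$.

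As a convenient comparator I would take the always-idle policy $\pi':\alpha_t \equiv 0$, for which $C_t^{\pi'} \equiv 0 \leq C_\text{max}$ and $\mathbb{E}\{\Delta_t^{\pi'}\} \leq \Delta_\text{max} := \sum_{m \in \mathcal{M}} \omega_m \max_{i,j} \delta^m(i,j)$, a finite constant since each $\delta^m$ is defined on a finite set. Substituting this comparator into \eqref{eq:bound on dpp} yields, under DPP,
\begin{equation*}
D(\mathcal{Z}_t) + V\,\mathbb{E}\{\Delta_t \mid \mathcal{Z}_t\} \leq B - \mathcal{Z}_t C_\text{max} + V \Delta_\text{max} \leq B + V \Delta_\text{max},
\end{equation*}
where the last inequality uses $\mathcal{Z}_t C_\text{max} \geq 0$. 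Dropping the nonnegative penalty term on the left gives the uniform drift bound $D(\mathcal{Z}_t) \leq B + V \Delta_\text{max}$.

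Taking total expectations, telescoping over $t = 0, 1, \ldots, T-1$, and using $L(\mathcal{Z}_t) = \tfrac{1}{2}\mathcal{Z}_t^2 \geq 0$ gives $\mathbb{E}\{\mathcal{Z}_T^2\} \leq \mathbb{E}\{\mathcal{Z}_0^2\} + 2T(B + V\Delta_\text{max})$. Jensen's inequality then yields $\mathbb{E}\{\mathcal{Z}_T\} = \mathcal{O}(\sqrt{T})$, so $\lim_{T \to \infty} \mathbb{E}\{\mathcal{Z}_T\}/T = 0$, i.e., $\mathcal{Z}_t$ is mean rate stable; by the Lyapunov optimization theorem cited after \eqref{eq:virtual queue}, this implies $\bar{C}(\pi_\text{DPP}) \leq C_\text{max}$ with probability one. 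The one place requiring care, rather than real difficulty, is the comparator step: one must exhibit an admissible stationary policy whose expected per-slot cost lies uniformly below $C_\text{max}$ independently of $\mathcal{Z}_t$. The always-idle policy trivially does so whenever $C_\text{max} > 0$, which the problem implicitly assumes, leaving the remainder as routine bookkeeping.
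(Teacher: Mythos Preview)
Your proposal is correct and follows essentially the same Lyapunov-drift comparison argument as the paper: bound the drift-plus-penalty under DPP by substituting a backlog-independent comparator policy satisfying a slackness condition, then conclude mean rate stability. The only cosmetic differences are that the paper uses a parametrized randomized comparator (of which your always-idle policy is the special case $\epsilon = C_{\max}$) and invokes \cite[Theorem~4.2]{LyapunovOptimmization} directly on the resulting inequality $D(\mathcal{Z}_t)+V\mathbb{E}\{\Delta_t\mid\mathcal{Z}_t\}\le B+V\Delta_{\max}-\epsilon\mathcal{Z}_t$, whereas you drop the $-\mathcal{Z}_t C_{\max}$ term and carry out the telescoping and Jensen steps by hand.
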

\begin{proof}
	We first consider a special stationary and randomized policy $\alpha_t^*$ that takes actions independent of system state and queue backlog, i.e.,
	\begin{align}
		\alpha_t^* = \begin{cases}
			0, & \textrm{with probability}~ 1- \sum_{m=1}^{M}\frac{C_\textrm{max} - \epsilon}{Mc_m}\\
			m, & \textrm{with probability}~\frac{C_\textrm{max} - \epsilon}{Mc_m}
		\end{cases} \label{eq:source-agnostic}
	\end{align}
	where $0 \leq  \epsilon\leq C_\textrm{max}$. Therefore, the following Slackness condition holds:
	\begin{align}
		\mathbb{E}\{C_t(\alpha_t^*) - C_\textrm{max}\} = -\epsilon. \label{eq:slacknesscondition}
	\end{align}
	Then, the drift-plus-penalty in \eqref{eq:dpp-1} satisfies
	\begin{align}
		&D(\mathcal{Z}_t) + V\mathbb{E}\{\Delta_t|\mathcal{Z}_t\} \notag\\
		\leq& B + V\mathbb{E}\{\Delta_t(\alpha_t)|\mathcal{Z}_t\} + \mathbb{E}\{\mathcal{Z}_t(C_t(\alpha_t) - C_\textrm{max})|\mathcal{Z}_t\} \\
		\overset{(a)}{\leq}& B + V\mathbb{E}\{\Delta_t(\alpha^*_t)|\mathcal{Z}_t\} + \mathbb{E}\{\mathcal{Z}_t(C_t(\alpha^*_t) - C_\textrm{max})|\mathcal{Z}_t\} \\
		\overset{(b)}{\leq}& B + V\Delta_\textrm{max} +  \mathcal{Z}_t\mathbb{E}\{C_t(\alpha^*_t) - C_\textrm{max}\} \label{eq:lastInequality}
	\end{align}
	where \textit{(a)} holds because the DPP policy chooses the best action in set $\mathcal{A}$, including $\alpha_t^*$; \textit{(b)} holds because $\alpha_t^*$ is independent of queue backlog and $\Delta_t$ is upper bounded by $\Delta_\textrm{max}$. Substituting \eqref{eq:slacknesscondition} into \eqref{eq:lastInequality} yields:
	\begin{align}
		D(\mathcal{Z}_t) + V\mathbb{E}\{\Delta_t|\mathcal{Z}_t\} \leq B + V\Delta_\textrm{max} - \epsilon\mathcal{Z}_t. \label{eq:LyapunovForm}
	\end{align}
	The above expression is in the exact form of the Lyapunov optimization theorem\cite[Theorem~4.2]{LyapunovOptimmization}. Therefore, $\mathcal{Z}_t$ is mean rate stable, and the time average constraint is satisfied.
\end{proof}

\begin{figure*}[t]
	\centering
	\begin{minipage}[t]{0.49\textwidth}
		\centering
		\includegraphics[width=0.85\linewidth]{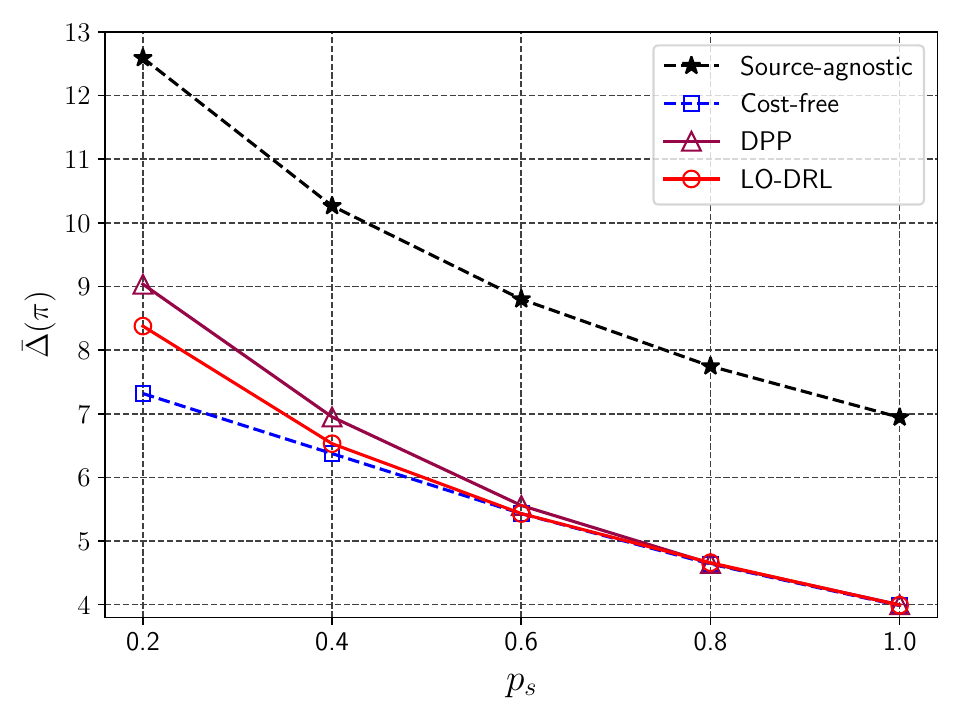}
		\caption{The average CAE vs. $p_s$ for the different policies.}
		\label{fig:costdiffps}
	\end{minipage}
	\hfill
	\begin{minipage}[t]{0.49\textwidth}
		\centering
		\includegraphics[width=0.85\linewidth]{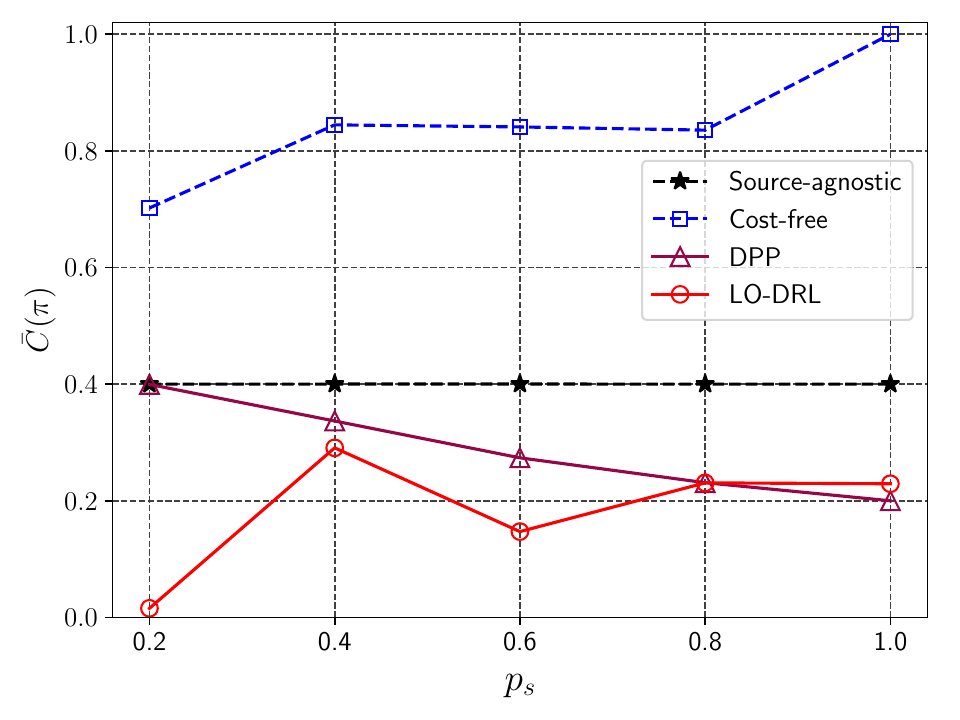}
		\caption{The transmission frequency vs. $p_s$ for the different policies.}
		\label{fig:transdiffps}
	\end{minipage}
\end{figure*}

\subsection{DRL-based Policy}
DRL is a promising tool to solve MDPs with unknown system statistics and large state/action spaces. However, time-averaged constraint satisfaction is a non-trivial issue in the DRL framework. One approach is to design an iterative training procedure that finds the optimal Lagrangian multiplier $\lambda^*$ and its corresponding $\lambda$-optimal policy $\pi_{\lambda^*}$. However, this approach may suffer from high computational complexity. 

As it can be seen in \eqref{eq:dpp-1}, the one-slot expected drift-plus-penalty function jointly considers cost minimization and constraint satisfaction. This means it can be used to regulate the behavior of a DRL agent and guide it toward a constraint-satisfying solution. Inspired by this, we propose a Lyapunov optimization-based DRL (LO-DRL) policy in the following.

\subsubsection{MDP formulation} At each slot $t$, the DRL agent observes current system state $s_t$, execute an action $\alpha_t \in \mathcal{A}$ according to the stationary policy $\pi(\cdot|s_t)$. Then the system state transitions to next state $s_{t+1}$ with a probability $P(s_{t+1}|s_t, \alpha_t)$. The agent learns policy through rewards, i.e., $R(s_t, \alpha_t, s_{t+1})$. The goal of the DRL agent is to maximize the average expected reward over an infinite horizon, i.e.,
\begin{align}
	J(\pi) = \limsup\limits_{T\rightarrow \infty}\frac{1}{T}\sum_{t=0}^{T-1} \mathbb{E}^\pi\{\gamma^t R_t\}
\end{align}
where $\gamma$ is the discounted factor that trades off long-term against short-term performance. In this work, we consider the negative drift-plus-penalty function as the reward signal, i.e.,
\begin{align}
	R_t = - (L(\mathcal{Z}_{t+1}) - L(\mathcal{Z}_t) + V\Delta_{t}). \label{eq:reward}
\end{align}
Herein, the one-slot expectations are ignored because: 1) the DRL agent has no knowledge about source/channel statistics, and 2) the system randomnesses can be averaged out through accumulated rewards. 

\begin{remark}
    The LO-DRL is a model-free approach that does not rely on prior information about the source/channel statistics. Although the offline training time scales exponentially with the number of sources, the LO-DRL offers real-time decision-making capability after deployment.
\end{remark}



\subsubsection{Algorithm design} In this work, the DRL agent is trained using the Proximal Policy Optimization (PPO) method\cite{PPO}, which is recognized as one of the state-of-the-art algorithms for solving large-scale MDPs with discrete or continuous action spaces. We adopt an actor-critic network architecture that consists of two neural networks: an ``actor" network that optimizes the policy and a ``critic" network that evaluates the performance. Both networks have four fully-connected layers with ReLU activation functions. The input layer has $2M$ neurons, where $M$ is the number of sources. The size of the hidden layers is $(128, 128)$. The output layer has $M+1$ neurons for the actor network, representing the action probabilities, and one neuron for the critic network, representing the state value.

\section{Simulation Results}\label{sec:results}
In this section, we validate the performance of the proposed policies. In our setup, we consider unit resource consumption cost, i.e., $c_m = 1, m \in \mathcal{M}$. Therefore, $\bar{C}(\pi) \in (0, 1]$ represents the transmission frequency. We consider the following baseline policies for comparison purposes
\begin{itemize}
	\item \textbf{Source-agnostic Policy:} Probabilistic actions taken according to \eqref{eq:source-agnostic} with $c_m = 1$ and $\epsilon=0$.
	\item \textbf{Cost-free Policy:} It is a special $\lambda$-optimal policy for $\lambda = 0$ in \eqref{Problem:LagrangianMDP} which minimizes the average CAE while neglecting the transmission cost. 
\end{itemize}

For the LO-DRL policy, the number of steps per episode is $10000$, the learning rate of the actor/critic network is $0.0003/0.001$, and the discount factor $\gamma$ is 0.99. Note that we need to train different agents for different scenarios, such as different numbers of sources or different system statistics. 

In the remaining, we first validate the performance of the proposed policies in a single-source scenario and analyze their behavior under different channel conditions. The multi-source scenario is examined in Section~\ref{sec:multiple sources}.

\subsection{Performance Comparison}
We first consider a single source scenario for performance comparison of different policies. We consider a source ($S_1$) consisting of four states, and its self-transition matrix and CAE matrix are time-invariant and are shown below

\begin{align}
	P^1 = \begin{bmatrix}
		0.8 & 0.2 & 0 & 0 \\
		0.1 & 0.8 & 0.1& 0\\
		0& 0.1& 0.8& 0.1\\
		0& 0& 0.2& 0.8
	\end{bmatrix}, 
	\delta^1 = \begin{bmatrix}
		0 & 10 & 50 & 30 \\
		10& 0& 40& 20\\
		20& 10& 0& 10\\
		30& 20& 40& 0
	\end{bmatrix} \notag
\end{align}
where $\delta^1_{i,j} = \delta^1(X^1_t = i, \hat{X}^1_t = j)$. We set $C_\textrm{max} = 0.4$ and $V = 100$.

Fig.~\ref{fig:costdiffps} and Fig.~\ref{fig:transdiffps} compare the performance of different policies as a function of the success probability. We observe that the proposed DPP and LO-DRL policies far outperform the source-agnostic policy regarding average CAE minimization. The LO-DRL policy outperforms the DPP policy, especially when the channel quality is poor. This is because the LO-DRL policy takes into account the long-term system performance and is capable of learning the system dynamics without prior knowledge. Remarkably, it shows that for relatively ``good" channel conditions, we can achieve the performance of the cost-free policy by utilizing fewer transmissions but exploiting the timing of the important information. \textit{This demonstrates the effectiveness of timing and the importance of information in such systems.}

\begin{figure*}[ht]
	\centering
	\begin{minipage}[t]{0.49\textwidth}
		\centering
        \includegraphics[width=0.85\linewidth]{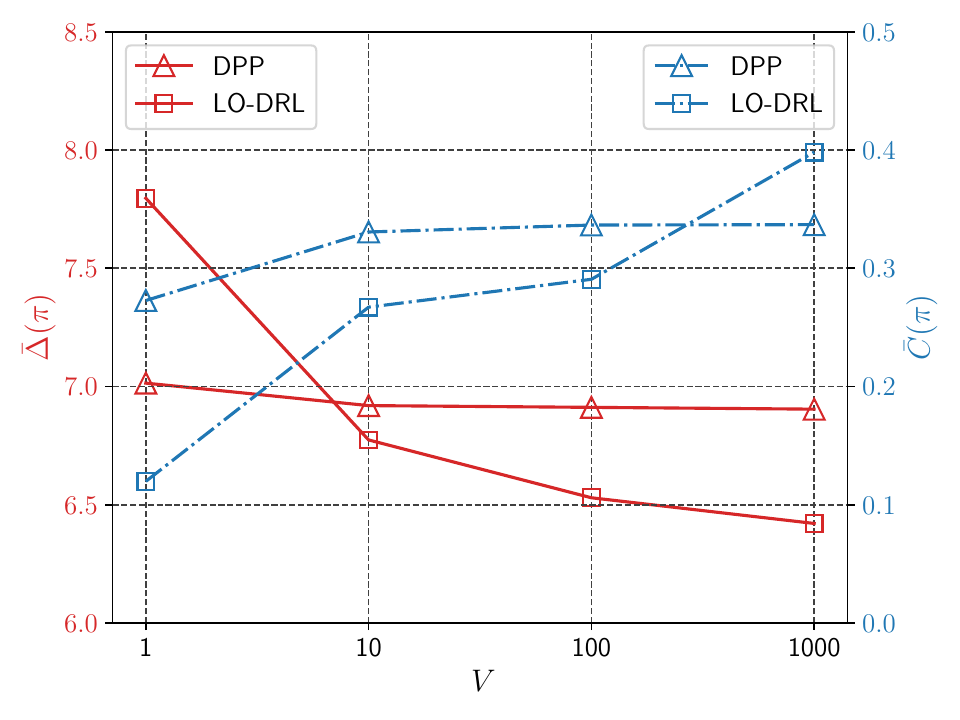}
        \caption{Performance comparison of the DPP and the LO-DRL policies.}
        \label{fig:comparision_dpp_drl}
	\end{minipage}
	\hfill
	\begin{minipage}[t]{0.49\textwidth}
		\centering
    	\includegraphics[width=0.85\linewidth]{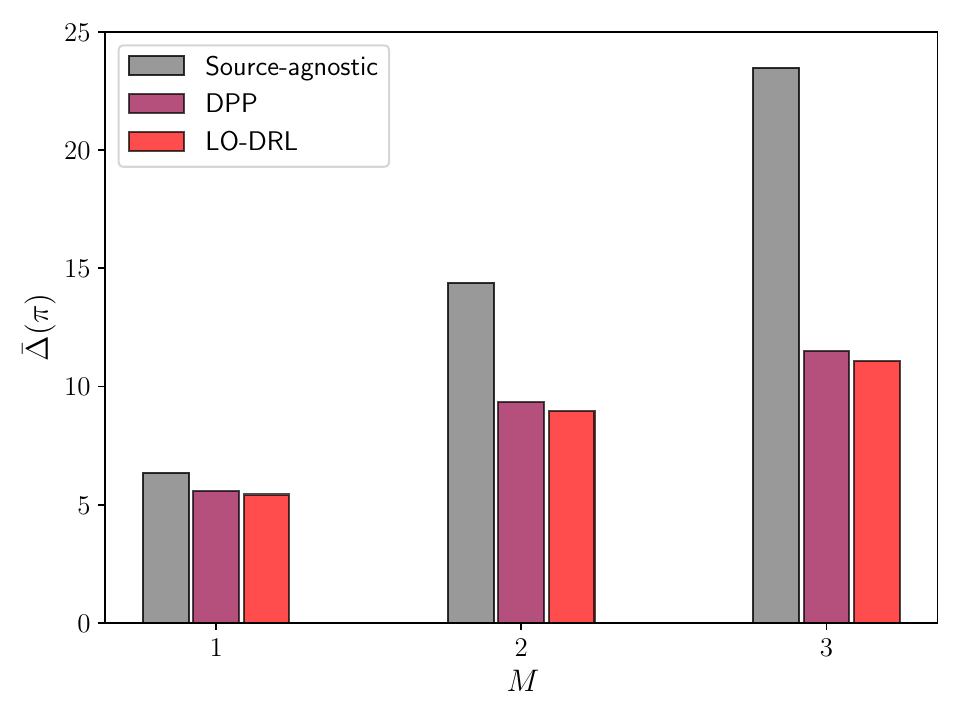}
    	\caption{Average CAE vs. different number of sources for different policies.}
    	\label{fig:multisource}
	\end{minipage}
\end{figure*}

\subsection{Sensitivity Analysis}
Fig.~\ref{fig:comparision_dpp_drl} compares the performance of the DPP and LO-DRL policies for different values of $V$. The success probability is $p_s = 0.4$, and the maximum allowed transmission cost is $C_\textrm{max} = 0.4$. It can be seen that both the DPP and LO-DRL policies satisfy the constraint, and the transmission frequency grows as $V$ increases. Additionally, the LO-DRL policy outperforms the DPP policy when $V$ is large. However, when $V$ is small, the LO-DRL policy may emphasize transmission reduction (virtual queue stability), thus resulting in performance degradation. The LO-DRL policy is sensitive to $V$ and can outperform the DPP policy with an appropriately chosen $V$.

\subsection{Multiple Sources}\label{sec:multiple sources}
We consider another type of source consisting of two states, and the self-transition matrix is $[1-p,p;q,1-q]$. We set $S_2$ as a slow-varying source with $p = 0.1, q=0.15$ and $S_3$ as a fast-varying source with $p = 0.2, q=0.7$. Additionally, the CAE matrices are $\delta^2 = \delta^3 = [0, 5; 1, 0]$. We set $c_m = 1, \omega_m=1, V = 100, p_s = 0.6, C_\textrm{max} = 0.8$.

Fig.~\ref{fig:multisource} compares the average CAE for different policies. It can be seen that our policies significantly outperform the source-agnostic policy in the considered scenarios. Due to the limitation on system resources, the average CAE grows as the number of sources increases. However, the average CAE of the DPP and LO-DRL policies grows much slower than the source-agnostic policy. This also shows that \textit{our goal-oriented policies factor in the significance of source states and the effectiveness of timing}. Furthermore, the LO-DRL policy performs better than the DPP policy in all scenarios.

\section{Conclusion}\label{sec:conclusion}
In this work, we studied the problem of remote estimation of multiple Markov sources in resource-constrained systems. We showcased how the CAE metric enables semantics-empowered and goal-oriented communication for NCSs. Furthermore, we developed two sampling policies that achieve near-optimal performance in CAE minimization while significantly reducing the ineffective status updates.

\bibliographystyle{IEEEtran}
\bibliography{ref}

\end{document}